\newtheorem{definition}{Definition}
\newtheorem{claim}{Theorem}
\newtheorem{proof}{Proof}
\renewcommand{\@biblabel}[1]{\quad#1.}
\date{}
\begin{document}

\begin{flushleft}
{\Large
\textbf{The Fiber Walk: A Model of Tip-Driven Growth with Lateral Expansion}
}
\\
Alexander Bucksch$^{(1,2)}$, 
Greg Turk$^{(1)}$, 
Joshua S. Weitz$^{(2,3)}$
\\
\bf{1} Georgia Institute of Technology, School of Interactive Computing, Atlanta, GA, USA
\\
\bf{2} Georgia Institute of Technology, School of Biology, Atlanta, GA, USA
\\
\bf{3} Georgia Institute of Technology, School of Physics, Atlanta, GA, USA
\\
$\ast$ E-mail: Corresponding bucksch@gatech.edu
\end{flushleft}

\section*{Abstract}

Tip-driven growth processes underlie the development of many plants. To
date, tip-driven growth processes have been modeled as an elongating path
or series of segments, without taking into account lateral expansion during
elongation. Instead, models of growth often introduce an explicit
thickness by expanding the area around the completed elongated path.
Modeling expansion in this way can lead to contradictions in the physical
plausibility of the resulting surface and to uncertainty about how the
object reached certain regions of space. Here, we introduce \emph{fiber
walks} as a self-avoiding random walk model for tip-driven growth
processes that includes lateral expansion. In 2D, the fiber walk takes
place on a square lattice and the space occupied by the fiber is modeled
as a lateral contraction of the lattice. This contraction influences the
possible subsequent steps of the fiber walk. The boundary of the area
consumed by the contraction is derived as the dual of the lattice faces
adjacent to the fiber. We show that fiber walks generate fibers that have
well-defined curvatures, and thus enable the identification of the process
underlying the occupancy of physical space. Hence, fiber walks provide a
base from which to model both the extension and expansion of physical
biological objects with finite thickness.

\section*{Introduction}

The growth and development of biological organisms involves processes at
multiple scales, including regulation of the timing and location of
specific developmental programs. Despite their structural complexity, many
prototypical processes involved in development have been identified. One
such process is the notion of tip-driven growth in plants \cite{rounds2013} that leads
to the successive elongation of plant branches or roots below ground. Plant growth also involves the process of expansion, e.g., the thickening
of plant roots. The result of such developmental programs in plants are extended and
thickened structures, e.g., complex roots and crowns.
\begin{figure}[ht]
\centering
\includegraphics[scale=0.2]{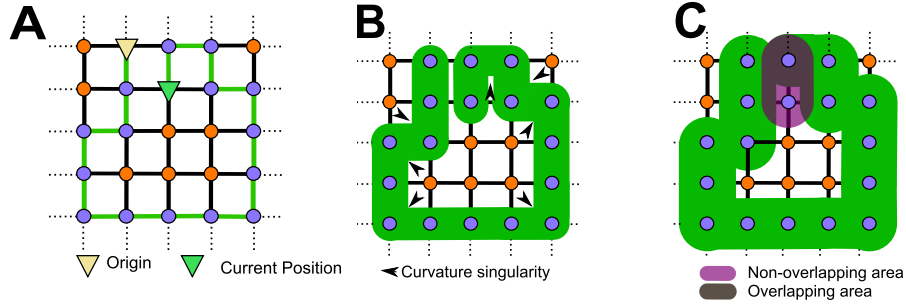}
\caption{Example of contradictions arising from decoupling elongation and expansion of a SAW. (A) SAW of 16 steps beginning at the origin (yellow triangle) with current position marked via the green triangle. (B) Thickened SAW - note the curvature singularity of the resulting surface. (C) SAW thickened by $2/3$ of an edge length - multiple regions in space are ``associated'' with different steps, leading to an identifiability problem, which holds for all thickening greater than or equal to $1/2$ of an edge length.}
\label{fig:motivation}
\end{figure}
A number of models have been proposed to examine the growth of extended structures. These vary in complexity. For example, perhaps the simplest, from a parameterization perspective, is the self-avoiding random walk (SAW) \cite{madras1993,landau2009}. A SAW is a sequence of steps along the line segments of grid edges that connect the locations of crossing grid-lines vertices. Such a grid is a square lattice in $\mathbb{Z}^2$ and the SAW satisfies the constraint that no vertex is visited twice and all walks are equally likely.

In SAWs and derived models, the elongation and expansion steps are decoupled. The consequences of such decoupling is that the resulting shape may be non-physical. In Fig.~\ref{fig:motivation} we show a SAW thickened to $1/3$ and $2/3$ of an edge length. In both cases, the SAW can be associated
with a boundary which is exactly the expansion distance away from any point
along the edges in the SAW. The derived boundary has singular points.
Singular points are associated with vanished curvature, which are
problematic from both geometric and biophysical perspectives. Next,
in the case of an expansion exceeding $1/2$ of the edge length,
the walk may intersect with itself, thus the resulting boundary is not a manifold, because no location can be occupied by two boundary points. Hence, decoupling elongation and growth raises the possibility of models generating an elongated object that is feasible only in the limit of infinitesimal spatial expansion. This violation of the manifold property leads often to difficulties in current reconstruction practice from random walks \cite{karch1999,hamarneh2010} and discrete samples in general \cite{bhattacharya2013,dey2013}.

Inspired by the growth of the taproot in plants in particular and by
growth processes in biology in general, we introduce here the fiber walk:
a SAW model that includes a notion of lateral expansion of the path. We
consider the simplest case of a randomly growing line that thickens while
elongating into a random direction. We examine the fiber walk in 2D and 3D, and note that 
although plant roots self-evidently grow in 3D, a number of experimental
studies restrict root growth to 2D or quasi-2D (e.g., \cite{Hund2009}). Intuitively, a lattice defines a space
in which the fiber walk grows a fiber along the edges of the lattice. The
interaction of the fiber walk with the lattice is two-fold: 1) the
elongation of the walk is achieved by stepping to unvisited adjacent
vertices on the lattice, and 2) the spatial expansion is accomplished by
contracting certain unvisited vertices adjacent to a visited vertex from
the side. In comparison to the ordinary SAW, our new contraction allows us
to define a discrete approximation of the area that is reserved for the
fiber to expand. Such an expansion implies a boundary, which is a curve in 2D
or a surface in 3D defining an area that gives insight into regions of the
lattice where no elongation and/or expansion can occur. Such inaccessible regions are caused by self-avoidance that results from the elongation and expansion of the growing
object. As we will demonstrate, the fiber walk resolves the contradictions
of decoupled growth models while creating a base for future studies of the
relationship between growth and form.

\section*{Method}
\subsection*{Overview}
We introduce the formal notion of the fiber in $n$ dimensions, whose
process is a walk over the edges of a lattice. First we introduce the
lattice as a graph consisting of vertices connected by edges. Secondly, we
define the fiber as a subset of all possible graphs on the a lattice whose
vertices have at most two incident edges. The fiber is characterized by
the presence of a stopping configuration on the lattice that we define.
After introducing the two basic notions of lattice and fiber walk, we
define fiber walks as the process that creates the fiber as sequence of
steps along lattice edges. The interaction between the fiber walk and the
lattice extends the class of SAWs. The extension adds the local
contraction of the lattice around the last reached position, which in turn
is the space consuming expansion of the fiber walk. We illustrate the
notions with 2D examples.

\begin{figure}[ht]
\centering
\includegraphics[scale=0.5]{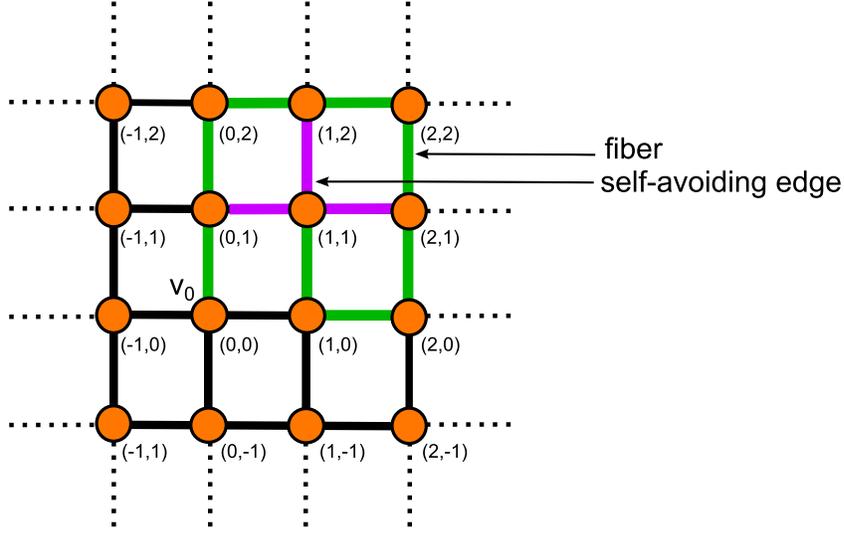}
\caption{A growing fiber (green) and its self-avoiding edges (purple). The example shows a fiber on a lattice as defined in Def.~\ref{position} and the vertex positions are assigned to each vertex.}
\label{fig:SA}
\end{figure}

\subsection*{The lattice}
\label{notion}

A lattice in $\mathbb{Z}^n$, on which the fiber walk grows a fiber, is a graph $L(V,E)$ of infinite size consisting of a set of vertices $V$ and a set of edges $E$. Such a lattice is derived by taking the graph Cartesian product of $n$ one-dimensional lattices at infinite length in $\mathbb{Z}$ , where $n$ denotes the dimension of the targeted lattice \cite{skiena1991}. Beside the notion of dimension, we define an embedding for the lattice by assigning unit edge length to every edge of the lattice. A position is assigned to every vertex, such that the vertex positions are denoted as $n$-tuples in $\mathbb{Z}^n$ (see Fig.~\ref{fig:SA}). The assignment of positions to each vertex will enable us later to define contractible edges.

\begin{definition}
\label{position}
Let $L$ be a lattice in $\mathbb{Z}^n$. Fix $v_0$ as the origin and assign unit length to all edges, then the position of each vertex is the $n$-tuple associated with $\mathbb{Z}^n$ .
\end{definition}

\subsection*{The fiber}

We aim at defining the fiber as a non-branching and loop free graph $F(V,E)$ consisting of set of vertices $V$ and a set of edges $E$ \cite{wilson1985}, growing on an infinite lattice whose vertices are enumerated in $\mathbb{Z}^n$. The fiber starts at a chosen vertex, the so-called origin. We start with a narrow definition of a directed path graph, excluding the trivial case of an `empty' path containing no edges. Furthermore, we restrict ourselves in the context of this paper to only one fiber on the lattice.
\begin{definition}
\label{def:directed}
Let $G$ be a graph with $m+1, 1\leq m \leq \infty$ vertices, such that exactly two vertices have one incident edge and $m-1$ vertices have exactly 2 incident edges. Fix $v_0$ and $v_m$ to be the vertices with one incident edge. Such a graph is called a \textit{directed path graph}. Also it is said to be \textit{directed by} the sequence $\{v_0,v_1,...,v_{m-1},v_{m}\}$ 
\end{definition}

In the following two definitions we introduce the fiber as a subset of possible directed path graphs. First we define self-avoiding edges to limit ourself to cycle-free directed path graphs and as a measurable characteristic of self-avoidance of the walk. Secondly, we use self-avoiding edges to obtain a stopping criteria of the fiber.
\begin{definition}
\label{def:SA}
Let $L$ be a lattice and $G$ be a directed path graph on $L$. An edge e belonging to the edge set of L is associated with two vertices belonging to the vertex set of $G$, is called a self-avoiding edge if $e$ is not in the edge set of $G$.
\end{definition}
Given self-avoiding edges we can define the stopping configuration of a fiber as a vertex that is only incident to self-avoiding edges.

\begin{definition}
\label{def:fiber}
Let $L$ be a lattice. Any directed path graph $G=\{v_0,v_1,...,v_{m}\}, 1\leq m \leq \infty$ is called a fiber if and only if all $v_i$ are distinct. A fiber is said to be stopped if and only if all incident edges of $v_{m}$ are self-avoiding edges.
\end{definition}

\begin{figure}[ht]
\centering
\includegraphics[scale=0.4]{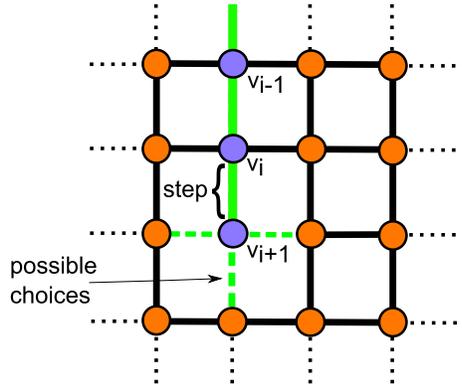}
\caption{Random choice of a follow up step. A step (green) shown on a 2D lattice. The possible follow-up steps are represented as a dotted line.}
\label{fig:StepandRandomChoice}
\end{figure}
\subsection*{The fiber walk}
In the following we describe the process that constructs the fiber. The process of constructing a fiber is called a fiber walk. The fiber walk is a sequence of steps, each from a vertex $v_i$ to an adjacent vertex $v_{i+1}$ via an edge, that is taken by an equally random choice of all incident edges to $v_{i}$ to reach $v_{i+1}$. Consequently, at the newly reached vertex $v_{i+1}$ a new random choice is executed. The choice is taken under the constraint that the newly chosen edge does not connect to a previously visited vertex. The excluded choices prevent the fiber walk from walking back on itself and forming loops. The process is depicted in Fig.~\ref{fig:StepandRandomChoice}.

\begin{definition}
\label{def:step}
Let $v_i$ be a vertex that belongs to a fiber with at least 2 vertices in its vertex set. One step from a vertex $v_i$ to a vertex $v_{i+1}$ along an edge of a lattice is the equal random choice among edges incident to $v_i$ that are not incident to any $v_{i-k}, 1\leq k \leq i$.
\end{definition}

Def.~\ref{def:fiber} -\ref{def:step} are equivalent to the notion of a growing self-avoiding random walk (Fig.~\ref{fig:StepandRandomChoice}) on a lattice \cite{lyklema1984}. It is sufficient for our purpose to investigate the growing SAW, because its configurations are possible configurations of the traditional SAW used to study the excluded volume effect.
\begin{figure*}
\centering
\includegraphics[scale=0.4]{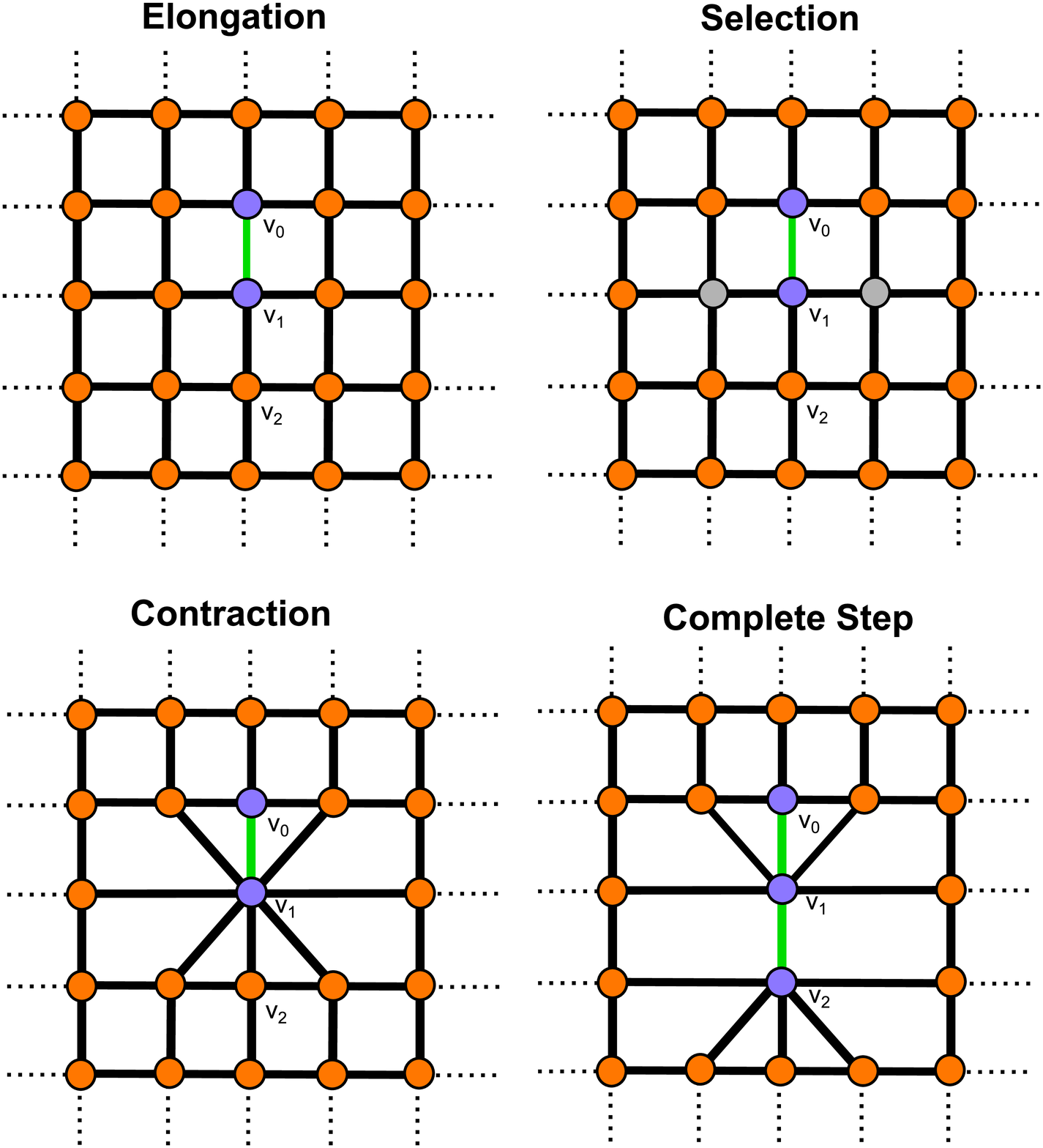}
\caption{Fiber walk on a 2D lattice. Elongation occurs as the first step which is chosen randomly between the edges incident to the origin $v_0$ (compare Fig.~\ref{fig:StepandRandomChoice}). Here the chosen edge to reach $v_1$ is shown in green. Selection of vertices incident to the walk from the side correspond to the expansion. The vertices selected to be merged with $v_1$ are shown in grey. Contraction of the selected vertices and its result after merging the selected vertices to $v_1$. Another step, including elongation and expansion, is shown as a second step on the lattice. The second step reaching $v_2$ uses the same color schema as before.}
\label{fig:Contraction}
\end{figure*}
\subsection*{Fiber walk contraction of the lattice}
\label{contraction}
In our fiber walk each elongation step causes a contraction on the lattice, which is a prerequisite to reconstruct the expanded boundary locally at each time step. The contraction of the lattice at each step is performed as a set of merges between vertices adjacent to the last reached vertex on the lattice. A merge is the union of two adjacent vertices $v_i$ and $v_j$, where $v_i$ inherits all edges of $v_j$. A complete contraction at $v_i$ is given as a set of merging operations driven by an edge labeling. We first define the merging operation of two vertices. We denote the merge between two vertices as $\oplus$. The process is depicted in Fig.~\ref{fig:Contraction}.

\begin{definition}
\label{def::contr}
Let $v_1$ be the last vertex reached on the lattice of a fiber walk and let $v_2$ be a vertex connected by a non-self-avoiding edge $e$ to $v_1$. The merge $v_1 \oplus v_2$ results in $v_{1}$ inheriting all edges incident to $v_2$ except for $e$, which is discarded.
\end{definition}

Each merge is said to cause a new edge to exist on the lattice, because two previously distant vertices are newly adjacent after each merge. Therefore, a number of merges are associated with each edge. 

In the following we define an initial edge labeling based on the vertex positions introduced in Def.~\ref{position}. After that we give the mechanism to recalculate labels after each contraction step.

\begin{definition}
\label{def:lables}
Let $v_1$ and $v_{2}$ be two vertices on a lattice with positions $p_1$ and $p_2$ and let both vertices be connected by an edge. The edge connecting $v_1$ and $v_2$ is said to be bi-directed with labels $(p_1-p_2)$ denoting the direction of the edge between $v_1$ and $v_2$ and $(p_2-p_1)$ denoting the direction of the edge between $v_2$ and $v_1$.
\end{definition}
For example, the labels for all 3D-directions are: $(\pm 1,0,0)$, $(0,\pm 1,0)$,$(0,0,\pm 1)$. The labels of edges incident to a vertex of a lattice contain only entries of -1,0 and 1, because we initially assigned unit length to all edges. The characteristic of the bidirectional labeling is that a vertex $v_i$ has an outgoing edge to all adjacent vertices $v_{i+1}$ and an incoming edge from each $v_{i+1}$ to $v_i$.

The recalculation of the labels considers the merge of two vertices $v_i \oplus v_j$ and is denoted as $\boxplus$. Let $l_1$ be the label belonging to the edge $e_1$ connecting $v_i$ to $v_j$. Let the label $l_2$ belong to an edge $e_2 \neq e_1$ incident to $v_j$. The label of a newly created edge is then calculated for each entry $a$ in the Cartesian $n$-tuple of the label, such that:
\begin{eqnarray}
a\boxplus a = a \\
a \boxplus (-a) = (-a) \boxplus (a) = 0 \\
a \boxplus 0 = 0 \boxplus a = a
\end{eqnarray}

Equations 1-3 assure that the entries of the direction labels are always -1,0 or 1. For example, if $l_1=(1,0,-1)$ and $l_2=(1,1,-1)$, then $l_1 \boxplus l_2 = (1,1,-1)$. 
\pagebreak

We identify edges for contraction by an entry in the $n$-tuple of their associated labels. The selection of edges to contract is given by two rules: 
A contraction is performed as a set of merging operations with vertices having an incident outgoing edge $e$ to $v$, iff:\\
1. $e$ is an outgoing edge of $v$, that differs in its label in more than one value with the last incoming edge to $v$ added to the fiber.\\
2. $e$ is not a self-avoiding edge\\

Edges selected for contraction, are said to be incident from the side and reflect the directions of lateral expansion. In practice, we select all edges that do not have an entry in their edge label that indicates a movement opposite to the current growth direction and is different from the elongation step taken in the current growth direction. Recall that edges belonging to the edge set of the fiber are self-avoiding edges.

\begin{figure}[ht]
\centering
\includegraphics[scale=0.35]{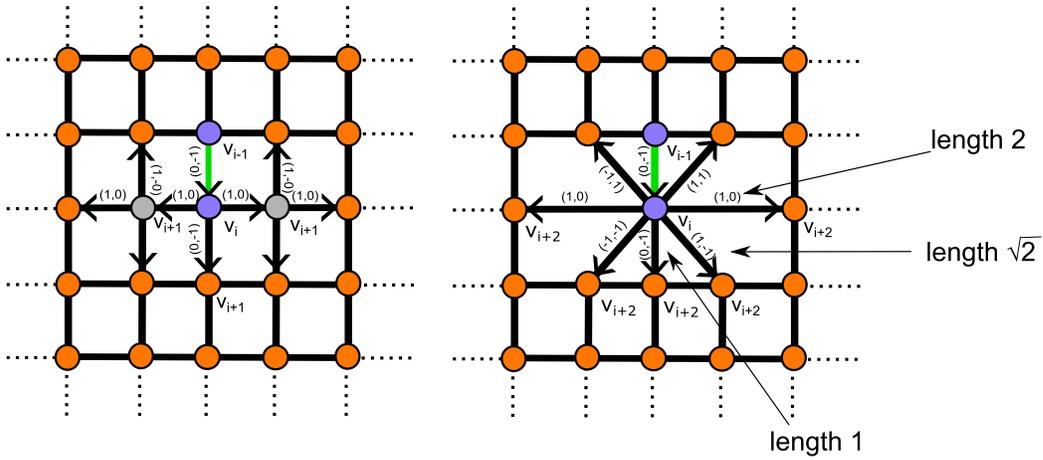}
\caption{ The lattice contraction of the fiber walk. Left: The fiber (green) on a lattice and its edges selected for contraction. Right: The contracted edges which are possible follow up steps of length 1,$\sqrt{2}$ and 2. In both figures the edge labels involved in the contraction and their direction, indicated as arrows are shown.}
\label{fig:pcase1}
\end{figure}

The contraction results in edges of different lengths in the edge set of the fiber and of self-avoiding edges. In the following we show that a fiber has three edge length classes and self-avoiding edges have five edge length classes in 2D by investigating the edge set of the lattice, but the proofing scheme applies to higher dimensions as well. We make use of two prerequisites; first we consider two cases of merging results: (1) merges not resulting in self-avoiding edges (Fig. \ref{fig:pcase1}) and (2) merges resulting in self-avoiding edges (Fig. \ref{fig:pcase2}). Both cases are distinguishable, because from Def.~\ref{def::contr} it follows that a contracted edge on the lattice has one vertex in the vertex set of the fiber and one vertex exclusively in the vertex set of the lattice. Secondly, we will distinguish between edges with identical edge label that add up their length if their common vertex is merged and edges with distinct edge label, whose resulting edge length is calculated by the Pythagorean theorem. The edge length calculation follows from the definition of the lattice in Def.~\ref{position} and the definition of the edge labels Def.~\ref{def:lables}.
\newpage
\begin{claim}
The edge set of the fiber has three edge length classes.
\end{claim}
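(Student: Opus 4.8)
The plan is to analyze exhaustively what edges can appear in the fiber's edge set after the contraction mechanism is applied at each step. The fiber's edge set consists of two types of edges: the original lattice elongation edges (which have unit length by Def.~\ref{position}), and the new edges created by merges that are incident to the fiber (i.e., those that become part of the path, not the self-avoiding ones). So the first step is to recall the two-case dichotomy flagged right before the statement: a contracted edge either does or does not result in a self-avoiding edge, and the two are distinguishable because a contracted fiber edge has exactly one endpoint in the fiber's vertex set and one endpoint only in the lattice. Here we are in case~(1), merges \emph{not} resulting in self-avoiding edges, so I would restrict attention to the edges depicted in Fig.~\ref{fig:pcase1}.

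**The enumeration.** Next I would work out, in 2D, which edges get contracted into $v_1$ (the last reached vertex) and what their resulting labels and lengths are. By the contraction rules, the edges selected are the outgoing edges of $v_1$ whose labels differ from the last incoming elongation edge in at most one coordinate and that are not self-avoiding. Fixing the incoming elongation direction as, say, $(0,1)$ (growth ``upward''), the candidate side-edges have labels $(1,0)$, $(-1,0)$, and $(0,1)$ (the continuation); self-avoidance removes the edge back along $(0,-1)$. Applying the label recalculation $\boxplus$ from Eqs.~(1)--(3): merging along a side-edge with identical label adds lengths (giving length $2$ along a coordinate axis when two collinear unit edges combine), merging a side-edge with the elongation edge produces a diagonal label such as $(1,1)$ or $(-1,1)$ whose length is $\sqrt{2}$ by the Pythagorean theorem, and a plain un-merged continuation edge retains length $1$. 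So the three classes are $1$, $\sqrt{2}$, and $2$ — exactly the lengths labeled in Fig.~\ref{fig:pcase1}. I would then argue that no further lengths arise: since every label entry stays in $\{-1,0,1\}$ (guaranteed by Eqs.~(1)--(3)), any edge in the fiber has a label that is an $n$-tuple of entries in $\{-1,0,1\}$, and in 2D with the self-avoidance constraint ruling out the backward direction, the only norms realizable are $1$ ($=\sqrt{1}$), $\sqrt{2}$, and $2$ (the last coming only from the additive collinear case, not from the $\ell_2$ norm of a label). I would note the argument is dimension-agnostic in structure, matching the paper's remark.

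**The main obstacle.** The hard part will be making the case analysis genuinely exhaustive rather than merely illustrative: one must be careful that contraction happens at \emph{every} step, so edges created at step $i$ can themselves be re-merged at step $i+1$, and I need to verify that iterating the $\boxplus$ operation never escapes the length set $\{1,\sqrt{2},2\}$. The key closure fact is that $\boxplus$ keeps every coordinate in $\{-1,0,1\}$, so a label never grows in the $\ell_\infty$ sense; combined with the observation that length $2$ edges only arise from the collinear additive rule (and a length-$2$ edge, being collinear with the growth axis, then either continues the walk as a unit-equivalent step or is itself blocked by self-avoidance), one shows the classes are stable under further contraction. I would close by distinguishing this from the five-class count for self-avoiding edges (case~(2), Fig.~\ref{fig:pcase2}), which picks up the additional lengths because those edges span two fiber-external vertices and can combine an elongation offset with a doubled lateral offset.
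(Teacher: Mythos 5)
Your three-case enumeration is exactly the paper's proof of this theorem: no merge leaves the incident lattice edge at unit length; a merge in which the contracted edge and the inherited edge carry identical labels adds lengths to give $2$; a merge with distinct labels gives $\sqrt{1^2+1^2}=\sqrt{2}$. So the core of your proposal coincides with the paper's argument, and your reading of Fig.~\ref{fig:pcase1} is correct.

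The problem lies in the part you yourself identify as the crux, namely closure under repeated contraction. Your key step -- ``every label entry stays in $\{-1,0,1\}$ under $\boxplus$, hence no lengths outside $\{1,\sqrt{2},2\}$ can arise'' -- is invalid, because labels do not encode length: a contracted edge of length $2$ carries the same unit-vector label as a lattice edge, and the self-avoiding edges of length $3$ and $\sqrt{5}$ exhibited in the paper's second theorem also have labels with all entries in $\{-1,0,1\}$. So boundedness of labels rules out nothing, and your auxiliary remark that a length-$2$ edge ``either continues the walk or is blocked'' is not an argument. What actually confines the lengths $\sqrt{3}$, $\sqrt{5}$ and $3$ to self-avoiding edges is the endpoint property you quote at the outset and then drop: by Def.~\ref{def::contr}, every contracted edge has exactly one endpoint in the fiber and one exclusively in the lattice. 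Hence, when the contraction at the current tip $v_i$ merges a lattice-only neighbour $v_{i+1}$, any inherited edge $e_{v_{i+1},v_{i+2}}$ of length $\sqrt{2}$ or $2$ is itself a previously contracted edge, so its far endpoint $v_{i+2}$ must already belong to the fiber; the resulting longer edge is therefore self-avoiding and can never be chosen as a step, while inherited unit lattice edges reproduce only the lengths $1$, $\sqrt{2}$, $2$. That is precisely the content of the paper's second theorem, and note that your closing characterization of that case as edges ``spanning two fiber-external vertices'' has it backwards: there $v_{i+2}$ lies in the fiber and only $v_{i+1}$ is external. (A minor slip: the contraction rule selects outgoing edges whose label differs from the last incoming edge in \emph{more} than one entry, not ``at most one''; your worked example is nonetheless consistent with the actual mechanism.)
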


\begin{proof}
Let $v_i$ be the last vertex reached by the fiber on the lattice and $v_{i-1}$ be the previously reached vertex on the lattice connected by the edge $e_{v_{i-1},v_{i}}$ with label $l_{v_{i-1},v_{i}}$. Furthermore, let $\{v_{i+1}\}$ be the set of vertices incident to $v_i$ and $\{v_{i+2}\}$ be the set of vertices reachable after contraction via $\{e_{v_{i},v_{i+2}}\}$ with corresponding labels $\{l_{v_{i},v_{i+2}}\}$.\\
\\
$v_i\oplus v_{i+1}$ is performed if $l_{v_{i-1},v_{i}} \boxplus l_{v_{i},v_{i+1}} \neq l_{v_{i-1},v_{i}}$, from which follows, that \\
if $l_{v_{i-1},v_{i}} \boxplus l_{v_{i},v_{i+1}} = l_{v_{i-1},v_{i}}$ no merge is performed, $e_{v_{i},v_{i+1}} = e_{v_{i},v_{i+2}}$  with unit length of 1. \\
If $l_{v_{i},v_{i+1}} \neq l_{v_{i-1},v_{i}}$ and $l_{v_{i},v_{i+1}} = l_{v_{i+1},v_{i+2}}$,  $e_{v_{i},v_{i+2}}$ has length $1+1=2$.\\
If $l_{v_{i},v_{i+1}} \neq l_{v_{i+1},v_{i+2}}$ and  $l_{v_{i},v_{i+1}} \neq l_{v_{i+1},v_{i+2}}$, then $e_{v_{i},v_{i+2}} $ has length $\sqrt{1^2+1^2}=\sqrt{2}$.
\begin{flushright}
$\blacksquare$
\end{flushright}
\end{proof}
\begin{figure}[ht]
\centering
\includegraphics[scale=0.27]{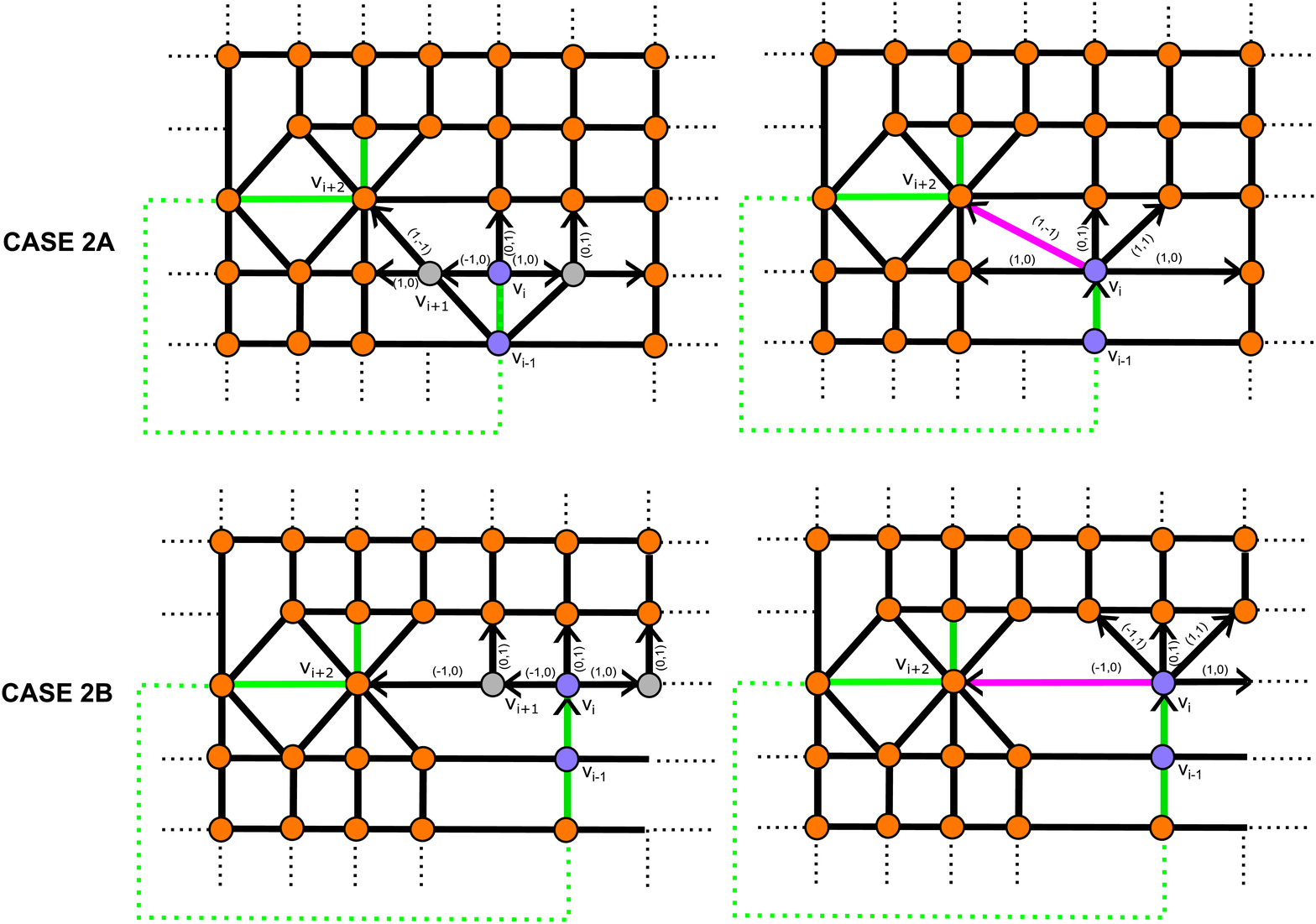}
\caption{Edge length classes of the fiber walk. Case 2a shows a self-avoiding edge (purple) of length $\sqrt{3}$ and Case 2b shows a self-avoiding edge of length 3. The fiber on a lattice is colored green and its edges selected for contraction are shown in grey. The dotted green line denotes an unknown fiber walk that is not affecting the given configuration. In both figures the edge labels involved in the contraction and their direction indicated as arrows are shown.}
\label{fig:pcase2}
\end{figure}
Proof 1 exhausts all combinations of the introduced prerequisites. It is trivial, that each of the three edge length classes of the fiber are possible edge length classes of self-avoiding edges, because the fiber walk has a non-contracting edge that can connect to each of the edge length classes of the fiber and to the fiber directly. This occurs if $v_{i+1}$ belongs to the vertex set of the fiber and no merge is performed. The edge length is therefore 1,2 or $\sqrt{2}$. The edge length classes of the fiber were shown as possible follow up steps. The follow-up step itself is a random choice and therefore only one of the possible follow-up steps are chosen. Hence, the edge length classes also apply to non-self-avoiding edges incident to the walk, which may become self-avoiding. In the following we use the edge length classes of Proof 1 as a starting point to exhaust all possible 2D cases of self-avoiding edge lengths.

\begin{claim}
Self-avoiding edges have five edge length classes.
\end{claim}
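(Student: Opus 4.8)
The plan is to mirror Proof~1 and extend its case analysis, now allowing the lattice endpoint of a contracted edge to already lie in the fiber. By Def.~\ref{def::contr} every contracted lattice edge has exactly one endpoint in the vertex set of the fiber and one endpoint only in the vertex set of the lattice, so a merge either produces a self-avoiding edge (its far endpoint was already visited) or it does not, and these two situations are exhaustive. I would first record that the three classes $1$, $\sqrt{2}$ and $2$ of Proof~1 are already attained by self-avoiding edges: a non-contracting edge incident to the walk can join $v_i$ either directly to a fiber vertex or to one of the two-step vertices $v_{i+2}$, and such an edge becomes self-avoiding as soon as its far endpoint has been visited by the walk (the ``unknown fiber walk'' drawn dotted in Fig.~\ref{fig:pcase2}). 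This settles three of the five classes and reduces the task to determining exactly which additional lengths a merge can create when the newly adjacent vertex is a fiber vertex.

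For the remaining classes I would enumerate, exactly as in Proof~1, the ways the edge taking part in a merge can be concatenated with the edges incident to its far endpoint, using the two prerequisites: edges with identical direction labels add their lengths, edges with distinct labels combine by the Pythagorean theorem. Taking the length-$1$, length-$\sqrt{2}$ and length-$2$ edges of Proof~1 as the first summand and attaching one further unit edge picked up through a side-contraction, the genuinely new outcomes are three collinear unit edges, giving $1+1+1=3$ (Case~2b of Fig.~\ref{fig:pcase2}), and a unit edge attached orthogonally to a diagonal, giving $\sqrt{(\sqrt{2})^{2}+1^{2}}=\sqrt{3}$ (Case~2a of Fig.~\ref{fig:pcase2}). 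Every other combination collapses back to a length already listed: two orthogonal unit edges only reproduce $\sqrt{2}$, and a label pointing opposite to the growth direction is forbidden by the contraction rule and in any case cancels a coordinate under Eqs.~1--3 rather than lengthening the edge. This yields the two extra classes and hence five in total, namely $1,\sqrt{2},2,\sqrt{3},3$.

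The step I expect to be the main obstacle is showing the enumeration is complete, i.e.\ that no self-avoiding edge of a length outside $\{1,\sqrt{2},2,\sqrt{3},3\}$ can ever arise, in particular none longer than $3$. The key sub-claim is that one elongation-plus-contraction step extends the reach beyond $v_i$ by at most two lattice edges, so that every edge present on the lattice after the step --- self-avoiding or not --- spans at most three unit edges; together with Eqs.~1--3, which keep each coordinate of a direction label in $\{-1,0,1\}$ and hence bound the number of mutually orthogonal unit directions a single label can encode, this caps both the collinear sum ($3$) and the Pythagorean combination ($\sqrt{3}$). I would also have to verify that a complete contraction --- several side vertices merged into $v_i$ within the same step --- cannot chain unit edges into a configuration longer than the three-edge ones above; this is precisely where the rule that an edge chosen for contraction must not carry a component opposite to the current growth direction does the work. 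Once this bound is in place only finitely many label combinations remain, and checking them exactly as in Proof~1 shows that precisely the five classes occur.
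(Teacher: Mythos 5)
Your overall strategy is the paper's: take the three fiber classes $1,\sqrt{2},2$ of Theorem~1 as trivially attainable by self-avoiding edges, then enumerate what a merge can produce when the vertex made adjacent to $v_i$ already belongs to the fiber, composing the contracted unit edge with the edge $e_{v_{i+1},v_{i+2}}$ of length $1$, $\sqrt{2}$ or $2$, adding lengths when the labels agree and applying the Pythagorean rule when they differ. The concrete gap is that your enumeration is incomplete: you never treat the combination of a length-$2$ edge with a unit edge carrying a \emph{different} label, which the paper lists explicitly and which yields $\sqrt{2^2+1^2}=\sqrt{5}$. Your assertion that ``every other combination collapses back to a length already listed'' is therefore false, and your final class list $\{1,\sqrt{2},2,\sqrt{3},3\}$ disagrees with the five classes the paper states and uses, $\{1,\sqrt{2},2,\sqrt{5},3\}$. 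This is not a cosmetic miss: the $\sqrt{5}$ (together with the $3$) self-avoiding edges are exactly what motivates the compensation factor $c$ in Def.~\ref{def:dist} and the later interpretation of configurations where expanded fibers come closer than the expansion permits, so a proof that loses this class loses the part of the statement the rest of the paper depends on. You land on ``five'' only because you simultaneously drop $\sqrt{5}$ and promote $\sqrt{3}$.

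Two further points of comparison. First, the completeness worry you flag as the main obstacle (excluding lengths beyond $3$, bounding the reach of a single elongation-plus-contraction step) is not where the paper puts the work: in the paper, exhaustiveness follows from Theorem~1, because the only edges available for composition at $v_{i+1}$ already lie in the three classes $1,\sqrt{2},2$, so running through those three lengths and the two label relations is automatically a complete case analysis; no separate reach bound is introduced. Second, the status of $\sqrt{3}$ is genuinely murky in the source: the paper's own Proof~2 and Fig.~\ref{fig:pcase2} do derive a $\sqrt{3}$ case (a unit edge composed under distinct labels with a $\sqrt{2}$ edge), while the Results section counts the classes as $1,\sqrt{2},2,\sqrt{5},3$; but whichever way that tension is resolved, the omission on your side is unambiguous --- the $\sqrt{5}$ case must appear, and your proof as written cannot produce it.
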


\begin{proof}
Let $v_i$ be the last vertex reached by the fiber on the lattice and $v_{i-1}$ be the previously reached vertex on the lattice connected by the edge $e_{v_{i-1},v_{i}}$ with label $l_{v_{i-1},v_{i}}$. Furthermore, let $\{v_{i+1}\}$ be the set of vertices incident to $v_i$ and $\{v_{i+2}\}$ be the set of vertices reachable after contraction via $e_{v_{i},v_{i+2}}$ with label $l_{v_{i},v_{i+2}}$. We consider the cases of a self-avoiding edge to be formed by $v_i\oplus v_{i+1}$ from the configuration of $e_{v_{i},v_{i+1}}$ having a vertex $v_{i+2}$ incident to $v_{i+1}$ that belongs to the vertex set of the fiber.\\
\\
If $e_{v_{i+1},v_{i+2}}$ has length $\sqrt{2}$ and $l_{v_{i+1},v_{i+2}} \neq l_{v_{i},v_{i+1}}$ it follows that $e_{v_{i},v_{i+2}} $ has length$\sqrt{\sqrt{2}^2+1^2}=\sqrt{3}$.\\
If $e_{v_{i+1},v_{i+2}}$ has length $2$ and $l_{v_{i+1},v_{i+2}} \neq l_{v_{i},v_{i+1}}$ it follows that $e_{v_{i},v_{i+2}}$ is of length ${\sqrt{2^2+1^2}=\sqrt{5}}$.\\
If $e_{v_{i+1},v_{i+2}}$ has length $2$ and $l_{v_{i+1},v_{i+2}} = l_{v_{i},v_{i+1}}$ it follows that $e_{v_{i},v_{i+2}}$ is of length $2+1=3$.\\
If $e_{v_{i+1},v_{i+2}}$ has length 1 and $l_{v_{i+1},v_{i+2}} \neq l_{v_{i},v_{i+1}}$, it follows that $e_{v_{i},v_{i+2}}$ is of length  $\sqrt{2}$.\\
If $e_{v_{i+1},v_{i+2}}$ has length 1 and $l_{v_{i+1},v_{i+2}} = l_{v_{i},v_{i+1}}$, it follows that $e_{v_{i},v_{i+2}}$ is of length $2$.

\begin{flushright}
$\blacksquare$
\end{flushright}
\end{proof}

\subsection*{The expansion of the fiber walk boundary}
\label{manifold}
In the following we note that a face is bounded by edges that connect vertices. Our central interest is to define the spatial expansion of the fiber walk as represented by a boundary. We introduce the fiber walk boundary for simplicity in 2D, but the principle extends naturally to higher dimensions. Based on Def.~\ref{def:SA}, we can distinguish three kinds of faces, which share an edge or a vertex with the walk: 1.) \emph{self-avoiding faces}, whose bounding edge set contains only edges that are self-avoiding or belonging to the fiber walk, 2.) \emph{non-self-avoiding faces}, having no self-avoidend edges in their bounding edge set or belong to the fiber walk and 3.) \emph{mixed faces}, bounded by edges that are self-avoiding, not self-avoiding or belonging to the fiber walk.

We first introduce the basic definition of the boundary by assuming that no self-avoiding edges are present on the lattice. In a second step we will extend this notion to self-avoiding edges.Fig.~\ref{fig:boundary} shows the derived boundary of the fiber walk on the minimal example given before in Fig.~\ref{fig:Contraction}.
\begin{figure}[ht]
\centering
\includegraphics[scale=0.3]{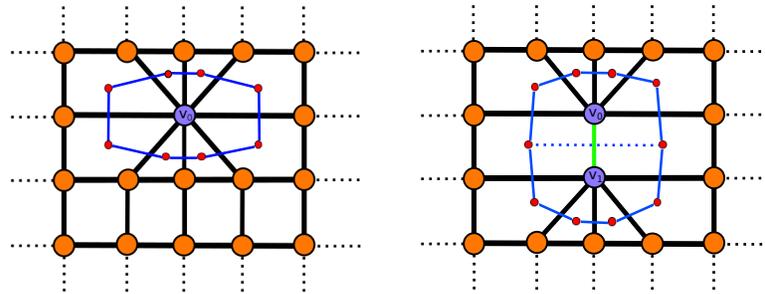}
\caption{The fiber walk boundary. The boundary (blue) of the fiber walk is derived from the face dual of the lattice (red) in 2D. The shown configuration corresponds to the example given in Fig.~\ref{fig:Contraction}. Here, dotted line segments denote non-unique edges, which do not belong to the boundary.}
\label{fig:boundary}
\end{figure}

\begin{definition}
\label{def:dual1}
Let $L$ be a lattice, then the dual lattice $L_d$ of $L$ has vertices $v$ each of which corresponds to a face of $L$ and each of whose faces corresponds to a vertex of $L$. Each vertex $v_i$ is connected via an edge to the vertex $v_{i+1}$ if the corresponding faces of $L$ are adjacent.
\end{definition}

As a final component of our setting we extend Def.~\ref{def:dual1} to achieve validity in the presence of self-avoiding edges based on an intermediate lattice. The intermediate lattice defines the faces adjacent to the walk by placing vertices at a certain fraction of the edge length of incident edges to the walk. This essential definition, as we see later on, is depicted in Fig.~\ref{fig:surfaceFig}.
\begin{definition}
\label{def:dist}
Let $L$ be a lattice and $F$ be a fiber on the lattice. Additionally, let $E$ be the set of lattice edges $e_i$ incident to the walk and belonging to the same face and $c$ be the number of contractions causing $e_i$ to exist. The intermediate lattice has vertices at maximal distance $d(e_i)=\frac{1}{2\times c}$ along $e_i$.
\end{definition}

Def.~\ref{def:dist} defines a maximal distance to compensate for self-avoiding edges of length $\sqrt{5}$ and 3, that are the result of two merges; here the compensation is defined in terms of $c < 1$. Without this compensation the fiber would expand at previously reached locations whenever a self-avoiding edge of length $\sqrt{5}$ and 3 is formed. These two self-avoiding edge length classes result in a distance between the fiber boundary, which is less than needed for a fiber to grow in-between and expand.

\begin{figure}[ht]
\centering
\includegraphics[scale=0.5]{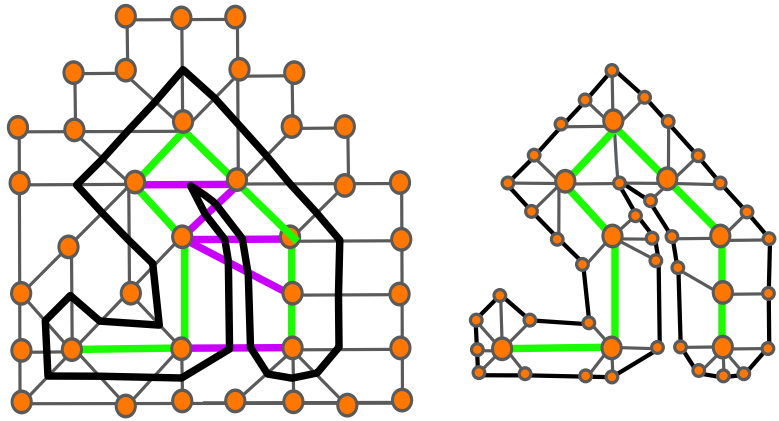}
\caption{Recovering the surface. (left) The original lattice with grey edges and orange vertices. The walk is shown in green and self-avoidend edges are shown in purple. The black line denotes the half-edge length of edges incident to the walk. (right) The intermediate lattice consisting of the black half edge line and the grey edges incident to the green walk. Small orange vertices are placed at half-edge distance while the bigger orange vertices are the original vertices belonging to the walk.}
\label{fig:surfaceFig}
\end{figure}
Def.~\ref{def:dist} refines the lattice such that Def.~\ref{def:dual1} applies to all faces incident to the fiber. We finalize this section with actual computations of the introduced fiber walk in 2D and 3D (Fig.~\ref{fig:Walks}). An example of a computed SAW and the fiber walk in 2D and 3D is given in Fig.~\ref{fig:Walks}. Fig.~\ref{fig:Walks} (a) and (b) show a SAW on a lattice including the self-avoiding edges. In comparison, Fig.~\ref{fig:Walks} (c) and (d) show a fiber walk and the lattice resulting from the contraction. We note that analysis of transient dynamics involve the following caveat: direct comparisons of time-steps (and corresponding statistics) with and without self-avoidance may depend on details of the growth process.
\begin{figure*}
\centering
\includegraphics[scale=0.21]{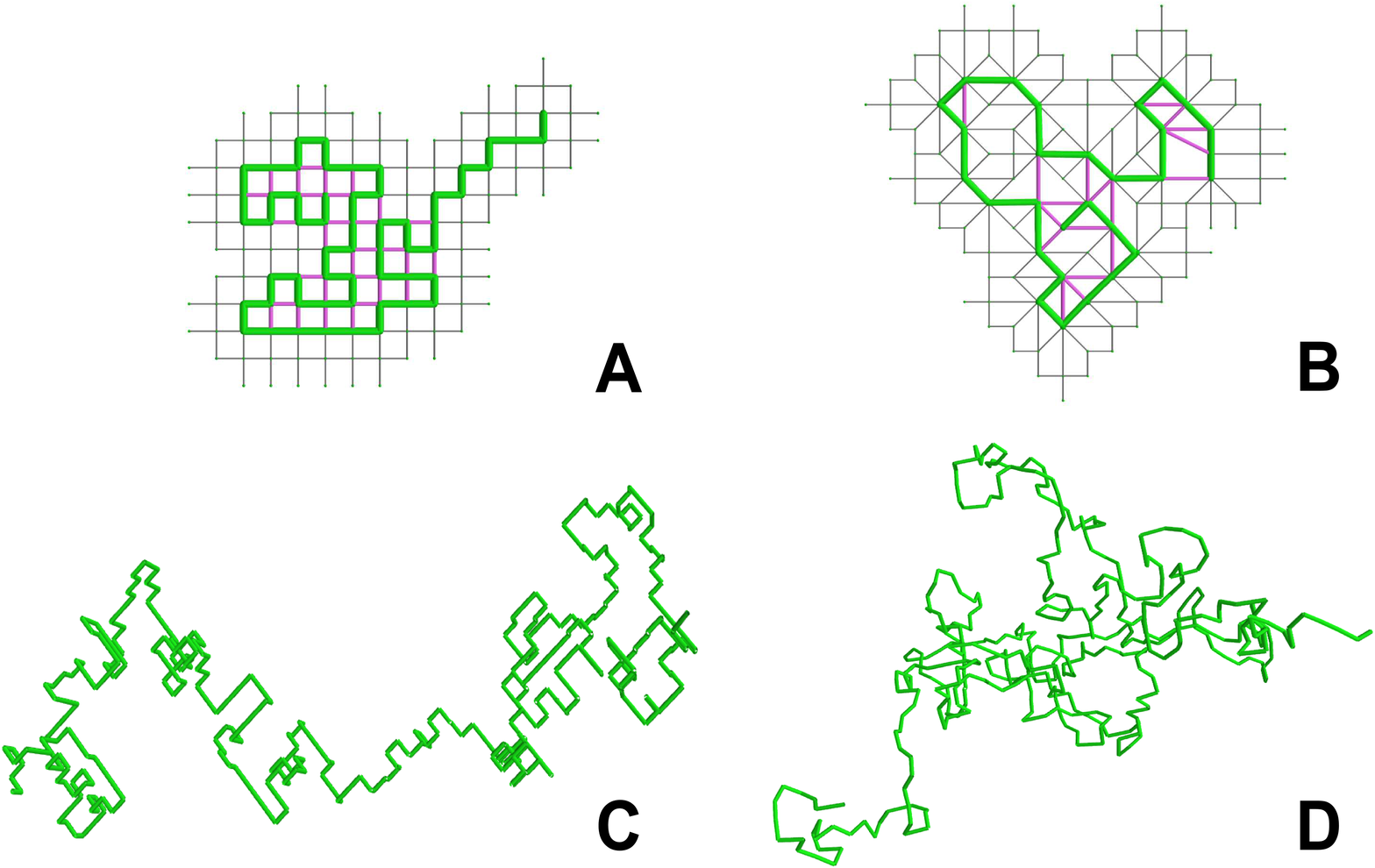}
\caption{Computed comparison of SAW and fiber walk. (A) A growing SAW in 2D and (B) a Fiber Walk in 2D. All walk edges are colored in green, the lattice is shown in (grey) and the self avoiding edges are colored in purple for all images. (C) A growing SAW in 2D and (D) a fiber walk 3D.}
\label{fig:Walks}
\end{figure*}
\section*{Results}
\subsection*{Fiber boundary does not have curvature singularities}
\label{sub:sing}

In the introduction we gave 2D examples of singularities on the boundary
if only the symmetric expansion around the vertices of the walk is
considered. It is trivial to observe that each vertex contributes to the
boundary wherever a vertex of the fiber has an incident edge belonging to
the edge set of the lattice. On an uncontracted lattice, a vertex does not
contribute to the boundary on both sides of the walk at locations where
the walk turns $90^\circ$. We compare here the case of a right angle turn
for the fiber walk to demonstrate the influence of the intermediate
lattice. Fig. \ref{fig:pRightAngle} shows the four possible local
configurations of a fiber walk that cause a right angle in 2D. We can
distinguish two configurations of right angles within the four cases:
configurations containing self-avoiding edges, and configurations
containing no self-avoiding edges.
\begin{figure}[ht]
\centering
\includegraphics[scale=0.29]{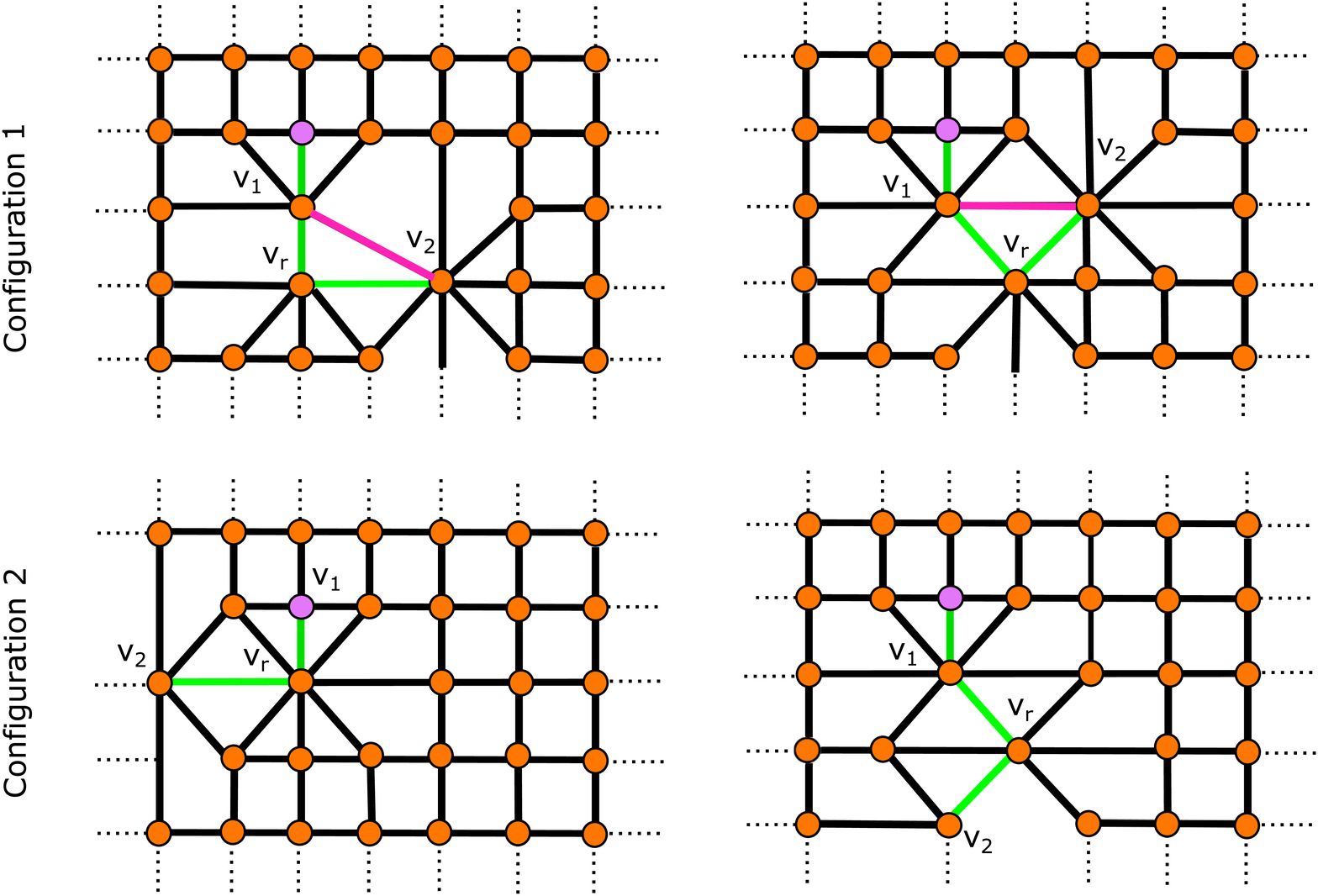}
\caption{Avoiding right angles. The four cases of right angle configurations of the fiber walk. Configuration 1 shows possible right angle configurations containing a self-avoiding edge. Configuration 2 shows the possible configurations of right angles without self-avoiding edges.}
\label{fig:pRightAngle}
\end{figure}

Let $v_r$ be a vertex at which the two incident walk edges connect to vertices $v_1$ and $v_2$ respectively.

\textbf{Configuration 1} has one edge connecting $v_1$ and $v_2$ at the side of the right angle. Hence, Def.~\ref{def:dist} applies for reconstructing the boundary, where the intermediate lattice forms triangular faces having $v_r$ in their vertex set at the side of the right angle.

\textbf{Configuration 2} has an edge $e$ incident to $v_r$ at the side of
the right angle that does not belong to the fiber and connects to a
vertex that exclusively belongs to the lattice. Hence, Def.~\ref{def:dual1}
applies for reconstructing the boundary. The edge $e$ guarantees that $v_r$
contributes to the boundary on the side of the right angle.

Configuration 1 demonstrates the case where our boundary construction
guarantees that the expansion around $v_r$ contributes to the boundary at
the side of the right angle. Configuration 2 is the standard case
resulting from the contraction. The two configurations are shown for a
computed fiber walk in Fig.~\ref{fig:boundarComp}A, along with the
constructed intermediate lattice in Fig.~\ref{fig:boundarComp}B.
Fig.~\ref{fig:boundarComp}C has varying curvature along the entire
boundary, even when the walk undergoes a right angle turn. The boundary
inhibits right angles at the these locations because of the contraction at
each step given in Def.~\ref{def::contr}. We can also refine the computed
boundary in Fig.~\ref{fig:boundarComp}C with a Catmull-Clark subdivision
scheme \cite{catmull1978} to recover an ``organic'', i.e. curved, shape.
The Catmull-Clark scheme obtains a fine B-spline approximation of the
boundary, which is valid because no right angles have to be approximated.

\begin{figure*}
\centering 
\includegraphics[scale=0.3]{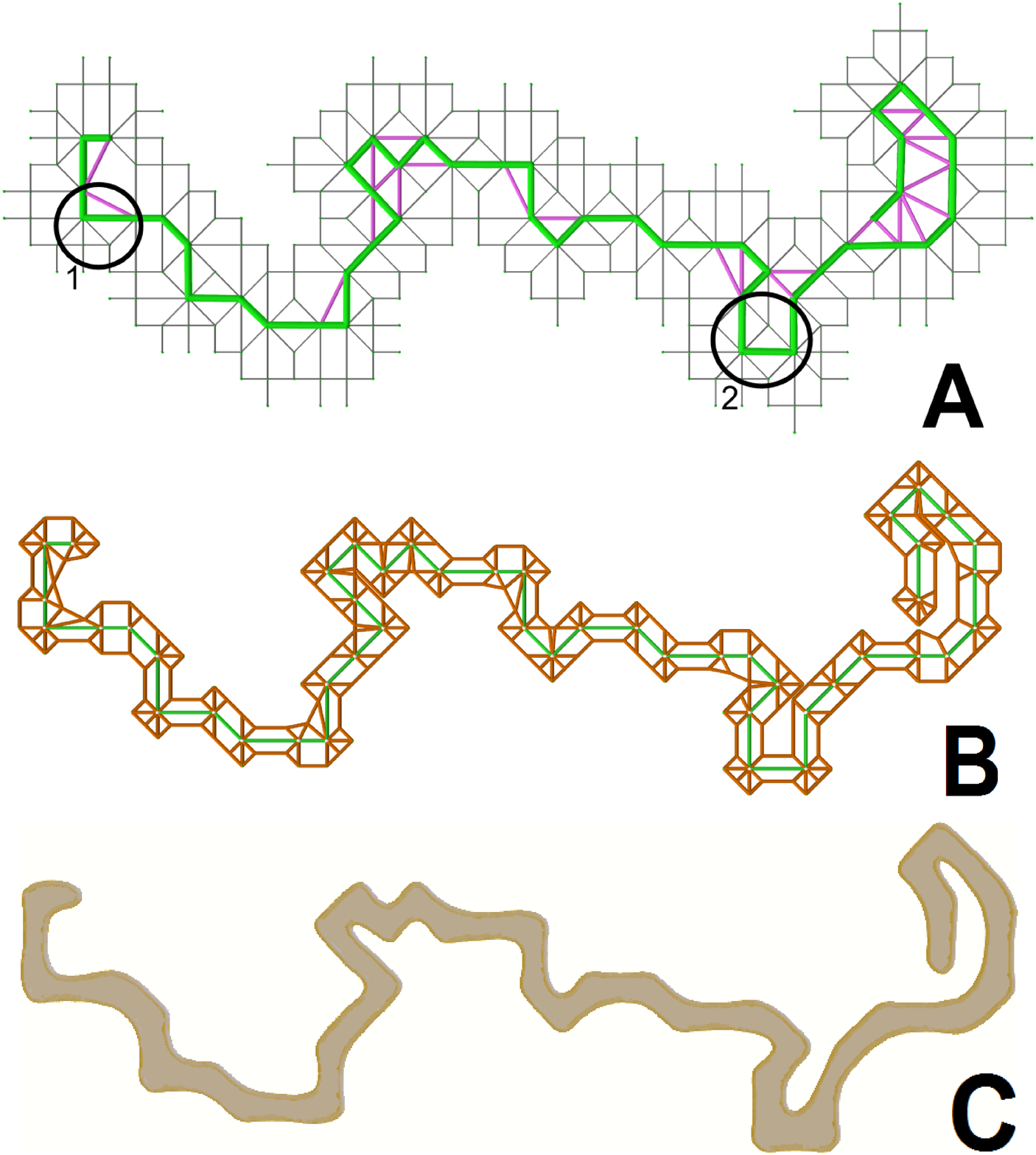}
\caption{ A computed fiber walk example with boundary. (A)The illustrated fiber walk (green) is shown on a grey lattice, with two locations marked where curvature singularities are avoided. (B) The intermediate lattice (orange). (C) The filled boundary (brown) smoothed with a B-Spline in 2D}
\label{fig:boundarComp}
\end{figure*}

\subsection*{Fiber walks define a region of occupied space}

Each step of a fiber walk elongates the fiber and expands its boundary,
which defines a region occupied by each step on the lattice. The fiber
walk contains three edge length classes in 2D of length $1, \sqrt{2}$ and $2$,
which are interpreted as a step forward on uncontracted lattice edges, a
diagonal step resulting from the contraction of the lattice and a step
over previously contracted lattice edges. Each edge length can be
interpreted in terms of the distance the walk must cross through the
expanded region before elongation can occur again. Each forward step
starts at a point inside the actual growing object and has to pass
through an expanded region. See SI Fig. 1-3 for 2D, 3D and 4D for computed
statistics of the observed edge length classes.

\subsection*{Fiber configurations are constrained by spatial expansion}
\label{sub:steplength}
Fiber walks generate self-avoiding edges. Self-avoiding
edges correspond to directions of growth that are inaccessible as a direct
result of the spatial expansion of the fiber.
There are five length classes of self-avoiding edges for a fiber grown on 
a 2D square lattice: $1,\sqrt{2},2,\sqrt{5}$ and $3$. See SI Fig. 1-3 for 2D, 3D and 4D for computed statistics of the observed self-avoiding edge length classes.
Although these length classes vary with lattice-type,
the fact that they are heterogeneous and can be
linked to spatial expansion is generic. In particular,
self-avoiding edges of length $\sqrt{5}$ and $3$ in 2D represent locations
where expanded fibers are closer to each other than needed for expansion. Hence, lateral expansion alter the possible configurations of a resulting fiber.

\begin{figure}[ht]
\centering
\includegraphics[scale=0.38]{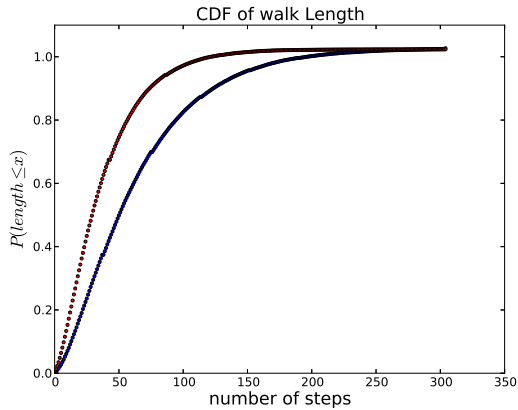}
\caption{Stopping times of SAW and fiber walk. Comparison of the growing SAW (blue) and fiber walk (red) stopping times. The figure shows the computed walk length at termination of 100.000 single walks.}
\label{fig:CDF}
\end{figure}

\subsection*{Spatial expansion inhibits elongation steps of fiber walks}
The stopping time describes the probability of a walk to generate a stopping configuration (compare Def.~\ref{def:fiber}). We give here the stopping times for the 2D growing SAW and fiber walk (Fig.~\ref{fig:CDF}) as the cumulative density function of the walk length (see SI for computation results). For the computation of the 2D stopping times we considered 100,000 walks of a maximum length of 300 steps. For the growing SAW the walks stopped after 50 steps with 50\% chance. In contrast the fiber walk reached the 50\% mark at approximately 29 steps. We found evidence that the two distributions are distinct from each other by performing a two-sample Kolmogorov-Smirnov test ($p<0.001,D=0.25454$). The stopping times allow the conclusion that fiber walks have on average less steps until termination then their dimensionally reduced equivalent. The computation of exact stopping times in 3D and 4D is beyond scope of this paper.

\begin{figure}[ht]
\begin{center}
\includegraphics[scale=0.38]{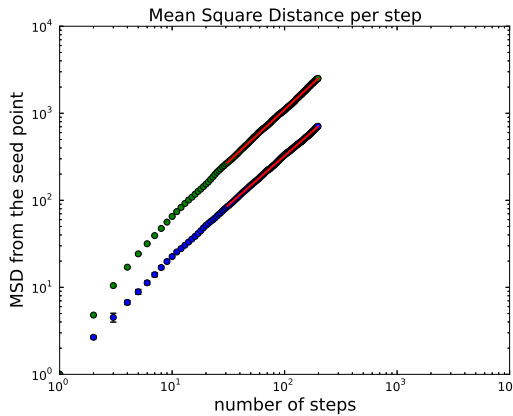} 
\caption{Scaling of the fiber walk and SAW. The summarized overview of the MSD scaling of the growing SAW (blue) and the fiber walk (green) is shown. For both, the fit line in the scaling region is shown in red.}
\label{fig:summary}
\end{center}
\end{figure}

\subsection*{Expansion lets the fiber walk initially reach further}
\label{sub:scale}
From the previous paragraph it follows that we need an indication of how the fiber walk expansion affects the growth of the fiber. We compare the end-to-end distance per step to evaluate how far the growing SAW and the fiber walk are on average away from the origin. The classic property to characterize this behavior is the scaling obtained as the average Euclidean mean-square-displacement (MSD) from the origin as a function of the number of steps on the lattice \cite{madras1993}.
The scaling of the MSD $R(N)$ in relation to the number of steps $N$ taken on the lattice is denoted as:
\begin{equation}
R(N) \propto N^{2\vartheta}\\
\end{equation}
\begin{equation}
2\vartheta= \lim_{N \to \infty}\frac{log(R(N))}{log(N)}
\end{equation}
The scaling exponent $\vartheta$ for a walk is determined as the slope of the best fit line where the function $R(N)$ reaches an asymptotic slope and is called the scaling region. In contrast the region before reaching an asymptotic slope is called the transient region. For example, a standard diffusive process, such as the ordinary random walk, corresponds to $\vartheta=0.5$.
In order to estimate the scaling exponent we computed 1,000 fiber walks and growing SAW's in dimensions 2,3,4 using simple sampling. Trapped walks were restarted by removing the last edge until a walk length of 200 steps was reached to assure that all 1,000 walks represent configurations with in the scaling region. Note that this restarting technique overrides stopping configurations that occur naturally. We selected walks of a certain minimum length to obtain the scaling of the MSD over the number of steps on the lattice. The exponents for all scaling exponents given in this sections are derived from the least-square fit of a line into the scaling region of the averaged MSD values per number of steps. We have chosen a change of less than 0.1 in the asymptotic slope as a threshold to determine the scaling region from 30 steps onwards and used bootstrapping to evaluate the error of the slope. We obtained for $\vartheta$, Fig.~\ref{fig:summary}(left), a value of 0.55$\pm$0.05 and 0.55$\pm$0.06 for the 2D fiber walk and the 2D growing SAW respectively. For dimensions~3~and~4 we obtained 0.53$\pm$0.04 and 0.53$\pm$0.03 for the fiber walk and the 0.52$\pm$0.04 and 0.53$\pm$0.03 for the growing SAW respectively. For computation results in 3D see SI Fig.4 and SI Fig.5 for the 4D computation.
Our main observation is that our fiber walk diverges further away from the origin before reaching the scaling region than the growing SAW, which is visible in the higher absolute MSD value compared to the SAW. This stronger divergence in the transient region suggests that expansion makes fiber walks initially more ballistic because of different step length resulting from the contraction. 

\subsection*{A measure for the overall expanded area}
In alignment with the scaling behaviour of the self-avoiding edges, the number of merges defines the amount of occupied space on the lattice. Nevertheless, the unique characteristic of the fiber walk is the contraction, which defines the boundary describing the spatial expansion of the fiber walk. Here we give the scaling of the number of contractions for the fiber walk as a measure for the for the size of the boundary (See SI Fig. 6-8). In 2D this resulted in a scaling exponent of 1.13, whereas for the 3D and 4D cases a similar exponent of 0.98 and 0.98 was obtained. The similar exponent means that not all edges incident from the side result in a merge. Fig.6-8 in the SI show the actual computation results.

\section*{Discussion}

In this paper we introduced the fiber walk, which is a growing SAW that
includes lateral expansion. The expansion of the fiber walk is modeled as
a local contraction around the last vertex reached on the lattice.  We
have shown that the fiber walk process constitutes a mechanism by which
physical space is reserved, yet does not imply the expansion of the object
into this space.  We have found that the expansion lets a walk diverge
initially further away from the origin before entering the scaling region
and that the fiber walk takes, on average, fewer steps until termination
than the SAW. Self-avoidance as proposed in our model causes encapsulated
regions on the lattice that inhibit further exploration by the fiber unless
a branching mechanism is added.

One benefit of modeling the contraction is that local object thickness
larger than the step size is possible. Although we have shown only the
difference between modeling elongation (alone) and elongation with expansion, our
fiber walk is capable to perform more than one contraction while growing.
A practical variation of the principal fiber walk would be to assign
probabilities to the selection of walk edges. Such probabilities should
allow the simulation of more ballistic walk behaviour.  As far as we are
aware, the connection between self-avoidance and contraction in
SAW-derived models has not been established in the study of random walks
in biology (see \cite{codling2008} for an overview). In particular, the
correlated random walk on a lattice has been studied to explain solely the
elongation and the direction in root development on the example of spruce
trees \cite{renshaw1981}. Recently the importance of spatial expansion for
modeling plants has been highlighted in \cite{prusinkiewicz2010} and
L-Systems might be a suitable formalism to express the fiber walk
mechanisms underlying the emergence of root shapes \cite{leitner2010}.
Simulations of the root growth have been developed that include rich
information on root physiology and biology \cite{dupuy2010,dorlodot2007,lynch1997} and many underlying models have been derived from measured
empirical data \cite{dunbabin2013}.  We see the fiber walk as a
complementary development in the sense that it models a simple but
abstract mechanism with explicit walk geometry and that could be integrated into models that incorporate greater realism.

Future work on the fiber walks will be targeted towards the creation and
simulation of physical networks including the spatial interactions between
several fiber that walk on the same lattice. In our opinion the fiber walk
has the potential to parallel geometrically the apical tip-driven growth
of individual branches in many plants at the apical meristem. Morphogens
within the apical meristem regulate the extension of individual branches
\cite{leyser2011,aloni2006,aloni2004} which may find an analogue in the
tip of the growing fiber. Examples include root growth in plants whose
extension is tip-driven \cite{grieneisen2007,hodge2009,overvoorde2010}.
Above-ground plant and tree structure is tip-driven \cite{sachs2004} and,
again morphogens regulate the branch extension at the apical meristem.
Similarly, the long and branching filamentous structure of fungi (hypha)
extend from the `Spitzenk\"{o}rper' \cite{lew2011}. Also hyphae growth is
simulated via a tip-driven growth emerging from a `vesicle supply center'
within the tip \cite{webster1980}.

Many such networks face the problem of being below the maximum
resolution of sensing technologies in non-laboratory conditions (e.g.
roots in real soil). The fiber walk enables us to develop localized
descriptors or measures that may serve as the basis for models in efforts to
identify adaptive growth rules in complex organisms.  Finally, the current
definition and implement of the fiber walk best describe the growth of
sessile objects.  However, extensions to the model could also include the
possibility of movement of edges within the fiber and further coupling of
physical mechanisms of growth influenced by surface properties. The method
is available within the supporting material of this paper, for download
on the web pages of the authors and on git hub (https://github.com/abucksch/FiberWalk).

\section*{Materials}

The results of this paper were produced with the python program that we
provide together with the paper. This software is suitable for developers
to use as a template to integrate our fiber walk into their software.  An
end user should follow the requirements provided in the ``readme'' file to
execute the program.

\section*{Acknowledgements}
The authors thank Daniel I. Goldman for helpful suggestions on the manuscript.
\bibliography{fiber}

\end{document}